\begin{document}
\title{Constructing Optimal Authentication Codes\\ with Perfect Multi-fold Secrecy}

\author{%
\IEEEauthorblockN{Michael Huber}
\IEEEauthorblockA{University of Tuebingen\\
              Wilhelm Schickard Institute for Computer Science\\
              Sand~13, D-72076 Tuebingen, Germany\\
              Email: michael.huber@uni-tuebingen.de}
}

\newtheorem{corollary}{Corollary}
\newtheorem{theorem}{Theorem}
\newtheorem{lemma}{Lemma}
\newtheorem{problem}{Problem}
\newtheorem{example}{Example}
\newtheorem{remark}{Remark}

\maketitle

\begin{abstract}
We establish a construction of optimal authentication codes achieving perfect multi-fold secrecy by means of combinatorial designs. This continues the author's work~(ISIT~2009, cf.~\cite{Hub2009}) and answers an open question posed therein. As an application, we present the first infinite class of optimal codes that provide two-fold security against spoofing attacks and at the same time perfect two-fold secrecy.
\end{abstract}

\section{Introduction}\label{Intro}

Authentication and secrecy are two crucial concepts in cryptography and information security. Although independent in their nature, various scenarios require that both aspects hold simultaneously. For \emph{information-theoretic} or \emph{unconditional} security (i.e. robustness against an attacker that has unlimited computational resources), authentication and secrecy codes have been investigated for quite some time. The initial construction of authentication codes goes back to Gilbert, MacWilliams \& Sloane~\cite{gil74}. A more general and systematic theory of authentication was developed by Simmons (e.g.,~\cite{Sim85,Sim92}). Fundamental work on secrecy codes started with Shannon~\cite{Shan49}.

This paper deals with the construction of optimal authentication codes with perfect multi-fold secrecy. It continues the author's recent work~\cite{Hub2009}, which naturally extended results by Stinson~\cite{Stin90} on authentication codes with perfect secrecy. We will answer an important question left open in~\cite{Hub2009} that addresses the construction of authentication codes with perfect multi-fold secrecy for equiprobable source probability distributions. We establish a construction of optimal authentication codes which are multi-fold secure against spoofing attacks and simultaneously provide perfect multi-fold secrecy. This can be achieved by means of combinatorial designs. As an application, we present the first infinite class of optimal codes that achieve two-fold security against spoofing as well as perfect two-fold secrecy.

The paper is organized as follows: Necessary definitions and concepts from the theory of authentication and secrecy codes as well as from combinatorial design theory will be summarized in Section~\ref{Prelim}. Section~\ref{Auth} gives relevant combinatorial constructions of optimal authentication codes which bear no secrecy assumptions. In Section~\ref{Stin}, we review Stinson's constructions in~\cite{Stin90} and recent results from~\cite{Hub2009}. Section~\ref{New} is devoted to our new constructions.

\section{Preliminaries}\label{Prelim}

\subsection{Authentication and Secrecy Codes}\label{PrelimA}

\noindent We rely on the information-theoretical or unconditional secrecy model developed by Shannon~\cite{Shan49}, and by Simmons (e.g.,~\cite{Sim85,Sim92}) including authentication. Our notion complies, for the most part, with that of~\cite{Stin90,Mass86}. In this model of authentication and secrecy three participants are involved: a \emph{transmitter}, a \emph{receiver}, and an \emph{opponent}.  The transmitter wants to communicate information to the receiver via a public communications channel. The receiver in return would like to be confident that any received information actually came from the transmitter and not from some opponent (\emph{integrity} of information). The transmitter and the receiver are assumed to trust each other. Sometimes this is also called an \emph{$A$-code}.

In what follows, let $\mathcal{S}$ denote a set of $k$ \emph{source states} (or \emph{plaintexts}), $\mathcal{M}$ a set of $v$ \emph{messages} (or \emph{ciphertexts}), and $\mathcal{E}$ a set of $b$ \emph{encoding rules} (or \emph{keys}). Using an encoding rule $e\in \mathcal{E}$,
the transmitter encrypts a source state $s \in \mathcal{S}$ to obtain the message $m=e(s)$ to be sent over the channel. The encoding rule is an injective function from $\mathcal{S}$ to $\mathcal{M}$, and is communicated to the receiver via a secure channel prior to any messages being sent.
For a given encoding rule $e \in \mathcal{E}$, let $M(e):=\{e(s) : s \in \mathcal{S}\}$ denote the set of \emph{valid} messages. For an encoding rule $e$ and a set $M^* \subseteq M(e)$ of distinct messages, we define $f_e(M^*):=\{s \in \mathcal{S}: e(s) \in M^*\}$,
i.e., the set of source states that will be encoded under encoding rule $e$ by a message in $M^*$.
A received message $m$ will be accepted by the receiver as being authentic if and only if $m \in M(e)$. When this is fulfilled, the receiver decrypts the message $m$ by applying the decoding rule $e^{-1}$, where \[e^{-1}(m)=s \Leftrightarrow e(s)=m.\]
An authentication code can be represented algebraically by a $(b \times k)$\emph{-encoding matrix} with the rows indexed by the encoding rules, the columns indexed by the source states, and the entries defined by $a_{es}:=e(s)$ ($1\leq e \leq b$, $1\leq s \leq k$).

We address the scenario of a \emph{spoofing attack} of order $i$ (cf.~\cite{Mass86}):
Suppose that an opponent observes $i\geq 0$ distinct messages, which are sent through the public channel using the same encoding rule. The opponent then inserts a new message $m^\prime$ (being distinct from the $i$ messages already sent), hoping to have it accepted by the receiver as authentic.
The cases $i=0$ and $i=1$ are called \emph{impersonation game} and \emph{substitution game}, respectively. These cases have been studied in detail in recent years (e.g.,~\cite{Stin92,Stin96}), however less is known for the cases $i \geq 2$. In this article, we focus on those cases where $i \geq 2$.

For any $i$, we assume that there is some probability distribution on the set of \mbox{$i$-subsets} of source states, so that any set of $i$ source states has a non-zero probability of occurring. For simplification, we ignore the order in which the $i$ source states occur, and assume that no source state occurs more than once.
Given this probability distribution  $p_S$  on $\mathcal{S}$, the receiver and transmitter choose a probability distribution $p_E$ on $\mathcal{E}$  (called \emph{encoding strategy}) with associated independent random variables $S$ and $E$, respectively. These distributions are known to all participants and induce a third distribution, $p_M$, on $\mathcal{M}$ with associated random variable $M$. The \emph{deception probability} $P_{d_i}$ is the probability that the opponent can deceive the receiver with a spoofing attack of order $i$. The following theorem (cf.~\cite{Mass86}) provides combinatorial lower bounds.

\begin{theorem}$[$Massey$]$
In an authentication code with $k$ source states and $v$ messages, the deception probabilities are bounded below by
\[P_{d_i}\geq \frac{k-i}{v-i}.\]
\end{theorem}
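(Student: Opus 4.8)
The plan is to establish the bound by an averaging argument over the opponent's admissible messages: I replace the opponent's optimal (maximizing) choice by a uniform average over all available messages and then evaluate that average exactly.

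First I would fix the combinatorial model underlying the deception probability. Since every encoding rule $e$ is injective, the set of valid messages $M(e)$ has exactly $k$ elements. In a spoofing attack of order $i$, the opponent observes a set $M^*$ of $i$ distinct messages and then selects a new message $m' \notin M^*$; the attack succeeds precisely when $m' \in M(E)$. Writing the conditional success probability as a payoff
\[
\mathrm{pay}(m';M^*) := P\bigl(m' \in M(E) \mid M^* \text{ observed}\bigr),
\]
the deception probability is the expected optimal payoff
\[
P_{d_i} = \sum_{M^*} p(M^*)\,\max_{m' \notin M^*}\mathrm{pay}(m';M^*).
\]

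Next I would bound the maximum from below by the average over the $v-i$ messages not in $M^*$:
\[
\max_{m' \notin M^*}\mathrm{pay}(m';M^*) \geq \frac{1}{v-i}\sum_{m' \notin M^*}\mathrm{pay}(m';M^*).
\]
The crucial step is to evaluate the inner sum exactly. Interchanging the summation over candidate messages with the expectation over encoding rules consistent with $M^*$, and using linearity, the sum equals the conditional expected number of valid messages lying outside $M^*$. Every encoding rule $e$ consistent with the observation satisfies $M^* \subseteq M(e)$ and contributes exactly $k$ valid messages, of which the $i$ observed ones lie in $M^*$; hence each such rule contributes exactly $k-i$ valid messages outside $M^*$. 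The expected number is therefore deterministically $k-i$, so the inner sum equals $k-i$ for every $M^*$.

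Combining the two displays gives $\max_{m' \notin M^*}\mathrm{pay}(m';M^*) \geq (k-i)/(v-i)$ uniformly in $M^*$, and taking the expectation over $M^*$ (whose weights $p(M^*)$ sum to one) yields the claim. I expect the main obstacle to be the exact evaluation of the inner sum: one must argue carefully that conditioning on the observation restricts attention to encoding rules $e$ with $M^* \subseteq M(e)$, that each such rule has exactly $k-i$ valid messages outside $M^*$ by injectivity, and that the interchange of the finite sum with the conditional expectation is legitimate. The averaging inequality itself is elementary, so the entire weight of the argument rests on this counting identity.
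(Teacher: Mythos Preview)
The paper does not give its own proof of this theorem; it is quoted as a result of Massey with a reference to~\cite{Mass86} and is used only as background for the notion of $t_A$-fold security against spoofing. There is therefore nothing in the paper to compare your argument against.

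That said, your proposal is the standard averaging proof and it is correct. The only point deserving a sentence of extra care is the claim that conditioning on the observation $M^*$ forces $M^*\subseteq M(E)$ with conditional probability one: this is immediate from the model (every observed message was produced by applying the chosen rule to some source state, hence lies in $M(E)$), and once that is noted the identity $\bigl|M(E)\setminus M^*\bigr|=k-i$ is deterministic on the support of the posterior, so the exchange of sum and conditional expectation is trivially justified. With that detail made explicit, the chain
\[
P_{d_i}\;=\;\mathbb{E}_{M^*}\Bigl[\max_{m'\notin M^*}\mathrm{pay}(m';M^*)\Bigr]\;\ge\;\mathbb{E}_{M^*}\Bigl[\tfrac{1}{v-i}\sum_{m'\notin M^*}\mathrm{pay}(m';M^*)\Bigr]\;=\;\tfrac{k-i}{v-i}
\]
goes through exactly as you describe, and this is in fact the argument one finds in Massey's original paper and in Stinson's survey.
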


An authentication code is called $t_A$\emph{-fold secure against spoofing} if
$P_{d_i}= (k-i)/(v-i)$ for all $0 \leq i \leq t_A$.

Moreover, we consider the concept of perfect multi-fold secrecy which has been introduced by Stinson~\cite{Stin90} and generalizes Shannon's fundamental idea of perfect (one-fold) secrecy (cf.~\cite{Shan49}). We say that an authentication code has \emph{perfect} $t_S$\emph{-fold secrecy} if, for every positive integer $t^* \leq t_S$, for every set $M^*$ of $t^*$ messages observed in the channel, and for every set $S^*$ of $t^*$ source states, we have \[p_S(S^*|M^*)=p_S(S^*).\]
That is, the \emph{a posteriori} probability distribution on the $t^*$ source states, given that a set of $t^*$ messages is observed, is identical to
the \emph{a priori} probability distribution on the $t^*$ source states.

When clear from the context, we often only write $t$ instead of $t_A$ resp. $t_S$.

\subsection{Combinatorial Designs}

\noindent We recall the definition of a combinatorial $t$-design.
For positive integers $t \leq k \leq v$ and $\lambda$, a \mbox{\emph{$t$-$(v,k,\lambda)$ design}} $\mathcal{D}$ is a pair \mbox{$(X,\mathcal{B})$}, satisfying the following properties:

\begin{enumerate}
\item[(i)] $X$ is a set of $v$ elements, called \emph{points},

\item[(ii)] $\mathcal{B}$ is a family of \mbox{$k$-subsets} of $X$, called \emph{blocks},

\item[(iii)] every \mbox{$t$-subset} of $X$ is contained in exactly $\lambda$ blocks.

\end{enumerate}
We denote points by lower-case and blocks by upper-case Latin
letters.
Via convention, let $b:=\left| \mathcal{B} \right|$ denote the number of blocks.
Throughout this article, `repeated
blocks' are not allowed, that is, the same \mbox{$k$-subset}
of points may not occur twice as a block. If $t<k<v$ holds, then we
speak of a \emph{non-trivial} \mbox{$t$-design}.
For historical reasons, a \mbox{$t$-$(v,k,\lambda)$ design} with
$\lambda =1$ is called a \emph{Steiner \mbox{$t$-design}} (sometimes
also a \emph{Steiner system}).
The special case of a Steiner design with parameters $t=2$ and $k=3$
is called a \emph{Steiner triple system $\mbox{STS}(v)$} of order $v$. A Steiner design
with parameters $t=3$ and $k=4$ is called a \emph{Steiner quadruple system $\mbox{SQS}(v)$} of order $v$.
Specifically, we are interested in Steiner quadruple systems in this paper.
As a simple example, the vector space
$\mathbb{Z}_2^d$ ($d \geq 3$) with the set $\mathcal{B}$ of blocks taken to be the set of all subsets of four distinct elements of
$\mathbb{Z}_2^d$ whose vector sum is zero, is a non-trivial \emph{boolean} Steiner quadruple system $\mbox{SQS}(2^d)$. More geometrically, these $\mbox{SQS}(2^d)$ consist of the points and planes of the \mbox{$d$-dimensional} binary affine space $AG(d,2)$.

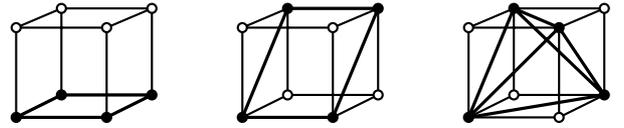
\begin{figure}[h]

\bigskip

\centering

\begin{tikzpicture}[scale=0.85,thick]

\filldraw [draw=black!100,fill=black!100]

(0,0) circle (2pt)

(1.4,0) circle (2pt)

(0.7,0.35) circle (2pt)

(2.1,0.35) circle (2pt);

\draw

(0,0) -- (0,1.4)

(0,0) -- (0.7,0.35)

(1.4,0) -- (1.4,1.4)

(1.4,0) -- (2.1,0.35)

(0.7,0.35) -- (2.1,0.35)

(2.1,0.35) -- (2.1,1.7)

(0,1.4) -- (1.4,1.4)

(0.7,1.7) -- (2.1,1.7)

(1.4,1.4) -- (2.1,1.7)

(0,1.4) -- (0.7,1.7)

(0.7,0.35) -- (0.7,1.7);

\filldraw [draw=black!100,fill=black!0]

(0,1.4) circle(2pt)

(1.4,1.4) circle (2pt)

(0.7,1.7) circle(2pt)

(2.1,1.7) circle (2pt);

\draw[very thick]

(0,0) -- (1.4,0)

(0,0) -- (0.7,0.35)

(1.4,0) -- (2.1,0.35)

(0.7,0.35) -- (2.1,0.35);

\filldraw [draw=black!100,fill=black!100]

(3.5,0) circle (2pt)

(4.9,0) circle (2pt)

(4.2,1.7) circle(2pt)

(5.6,1.7) circle (2pt);

\draw (3.5,0) -- (4.9,0)

(3.5,0) -- (3.5,1.4)

(3.5,0) -- (4.2,0.35)

(4.9,0) -- (4.9,1.4)

(4.9,0) -- (5.6,0.35)

(4.2,0.35) -- (5.6,0.35)

(5.6,0.35) -- (5.6,1.7)

(3.5,1.4) -- (4.9,1.4)

(4.2,1.7) -- (5.6,1.7)

(4.9,1.4) -- (5.6,1.7)

(3.5,1.4) -- (4.2,1.7)

(4.2,0.35) -- (4.2,1.7);

\filldraw [draw=black!100,fill=black!0]

(4.2,0.35) circle (2pt)

(5.6,0.35) circle (2pt)

(3.5,1.4) circle(2pt)

(4.9,1.4) circle (2pt);

\draw[very thick]

(3.5,0) -- (4.9,0)

(3.5,0) -- (4.2,1.7)

(4.9,0) -- (5.6,1.7)

(4.2,1.7) -- (5.6,1.7);

\filldraw [draw=black!100,fill=black!100]

(7,0) circle (2pt)

(7.7,1.7) circle(2pt)

(8.4,1.4) circle (2pt)

(9.1,0.35) circle (2pt);

\draw (7,0) -- (8.4,0)

(7,0) -- (7,1.4)

(7,0) -- (7.7,0.35)

(8.4,0) -- (8.4,1.4)

(8.4,0) -- (9.1,0.35)

(7.7,0.35) -- (9.1,0.35)

(9.1,0.35) -- (9.1,1.7)

(7,1.4) -- (8.4,1.4)

(7.7,1.7) -- (9.1,1.7)

(8.4,1.4) -- (9.1,1.7)

(7,1.4) -- (7.7,1.7)

(7.7,0.35) -- (7.7,1.7);

\filldraw [draw=black!100,fill=black!00]

(8.4,0) circle (2pt)

(7.7,0.35) circle (2pt)

(7,1.4) circle(2pt)

(9.1,1.7) circle (2pt);

\draw[very thick]

(7,0) -- (7.7,1.7)

(7,0) -- (8.4,1.4)

(7,0) -- (9.1,0.35)

(7.7,1.7) -- (9.1,0.35)

(8.4,1.4) -- (9.1,0.35)

(8.4,1.4) -- (7.7,1.7);
\end{tikzpicture}

\caption{\small{Illustration of the unique $\mbox{SQS}(8)$, with three types of blocks: faces, opposite edges, and inscribed regular tetrahedra.}}\label{cube}

\end{figure}

For the existence of \mbox{$t$-designs}, basic necessary
conditions can be obtained via elementary counting arguments (see,
for instance,~\cite{BJL1999}):

\begin{lemma}\label{s-design}
Let $\mathcal{D}=(X,\mathcal{B})$ be a \mbox{$t$-$(v,k,\lambda)$}
design, and for a positive integer $s \leq t$, let $S \subseteq X$
with $\left|S\right|=s$. Then the number of blocks containing
each element of $S$ is given by
\[\lambda_s = \lambda \frac{{v-s \choose t-s}}{{k-s \choose t-s}}.\]
In particular, for $t\geq 2$, a \mbox{$t$-$(v,k,\lambda)$} design is
also an \mbox{$s$-$(v,k,\lambda_s)$} design.
\end{lemma}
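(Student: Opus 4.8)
The plan is to establish the formula by a straightforward double-counting argument, after which the ``in particular'' assertion follows at once from the observation that the resulting count does not depend on the chosen set $S$.

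First I would fix an arbitrary $s$-subset $S \subseteq X$ and define $\lambda_s$ to be the number of blocks $B \in \mathcal{B}$ with $S \subseteq B$. A priori this quantity could depend on the choice of $S$, and the whole point of the argument is to show that it does not. The key step is to count, in two different ways, the number $N$ of incident pairs $(T,B)$, where $T$ is a $t$-subset of $X$ with $S \subseteq T$ and $B \in \mathcal{B}$ is a block with $T \subseteq B$.

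On the one hand, I would group the pairs by their first coordinate $T$. Every $t$-subset $T \supseteq S$ is, by property (iii) of the design, contained in exactly $\lambda$ blocks; and the number of such $t$-subsets equals the number of ways to adjoin $t-s$ further points chosen from the remaining $v-s$ points of $X$, namely $\binom{v-s}{t-s}$. This gives $N = \lambda \binom{v-s}{t-s}$. On the other hand, I would group the pairs by their second coordinate $B$. A block $B$ contributes to $N$ only when $S \subseteq B$ (otherwise no admissible $T$ exists), and in that case the admissible sets $T$ are obtained by choosing the remaining $t-s$ points of $T$ among the $k-s$ points of $B$ lying outside $S$, giving $\binom{k-s}{t-s}$ choices. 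Summing over the $\lambda_s$ blocks through $S$ yields $N = \lambda_s \binom{k-s}{t-s}$.

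Equating the two expressions produces $\lambda_s \binom{k-s}{t-s} = \lambda \binom{v-s}{t-s}$, and solving for $\lambda_s$ gives the claimed formula. Since the right-hand side is manifestly independent of the particular $s$-subset $S$, every $s$-subset lies in exactly $\lambda_s$ blocks, which is precisely the defining property of an $s$-$(v,k,\lambda_s)$ design. I do not expect any genuine obstacle here: the only point meriting a little care is the remark that a block contributes to $N$ exactly when it contains $S$, so that the second count really does isolate the blocks through $S$ rather than inadvertently counting all blocks.
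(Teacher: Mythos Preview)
Your proof is correct. The paper does not actually supply its own proof of this lemma; it merely introduces it as a ``basic necessary condition'' obtainable ``via elementary counting arguments'' and cites the standard reference~\cite{BJL1999}. Your double-counting of pairs $(T,B)$ with $S \subseteq T \subseteq B$ is precisely the classical elementary argument alluded to, so there is nothing to compare and no gap to report.
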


It is customary to set $r:= \lambda_1$ denoting the
number of blocks containing a given point. It follows

\begin{lemma}\label{Comb_t=5}
Let $\mathcal{D}=(X,\mathcal{B})$ be a \mbox{$t$-$(v,k,\lambda)$}
design. Then the following holds:
\begin{enumerate}

\item[{(a)}] $bk = vr.$

\smallskip

\item[{(b)}] $\displaystyle{{v \choose t} \lambda = b {k \choose t}.}$

\smallskip

\item[{(c)}] $r(k-1)=\lambda_2(v-1)$ for $t \geq 2$.

\end{enumerate}
\end{lemma}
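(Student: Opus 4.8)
The plan is to establish each identity by a double-counting argument, counting a suitable set of incident pairs in two different ways; parts (a) and (b) are in fact both instances of Lemma~\ref{s-design} with the counting index pushed to its extreme values.

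For part~(b), I would count the number $N$ of incident pairs $(T,B)$, where $T$ is a \mbox{$t$-subset} of $X$ and $B \in \mathcal{B}$ is a block with $T \subseteq B$. Counting first over the \mbox{$t$-subsets}: there are $\binom{v}{t}$ choices for $T$, and by property~(iii) each lies in exactly $\lambda$ blocks, so $N = \binom{v}{t}\lambda$. Counting instead over the blocks: there are $b$ blocks, and each block, being a \mbox{$k$-subset}, contains exactly $\binom{k}{t}$ \mbox{$t$-subsets}, so $N = b\binom{k}{t}$. Equating the two expressions yields~(b). Equivalently, this is the formula of Lemma~\ref{s-design} in the boundary case $s=0$, reading $\lambda_0 = b$.

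For part~(a), I would count the flags, i.e.\ the incident pairs $(x,B)$ with $x \in X$, $B \in \mathcal{B}$, and $x \in B$. Counting over points gives $vr$, since each of the $v$ points lies in $r=\lambda_1$ blocks by definition; counting over blocks gives $bk$, since each of the $b$ blocks contains exactly $k$ points. Hence $bk = vr$.

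For part~(c), where $t \geq 2$, I would fix an arbitrary point $x \in X$ and count the incident pairs $(y,B)$ with $y \neq x$ and $x,y \in B$. Summing over the $r$ blocks through $x$, each contributing its remaining $k-1$ points, gives $r(k-1)$; summing over the $v-1$ points $y \neq x$, where each pair $\{x,y\}$ lies in $\lambda_2$ blocks, gives $\lambda_2(v-1)$. This proves~(c). None of these steps presents a genuine obstacle; the only point requiring care is the hypothesis $t \geq 2$ in~(c), which—via the last assertion of Lemma~\ref{s-design}—guarantees that every pair of points lies in a constant number $\lambda_2$ of blocks, so that the second count is legitimate.
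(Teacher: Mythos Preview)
Your double-counting arguments are correct and entirely standard. The paper itself does not spell out a proof of this lemma; it simply writes ``It follows'' after introducing $r:=\lambda_1$, treating all three identities as immediate consequences of Lemma~\ref{s-design}. Your proposal makes this explicit: as you observe, (b) is the case $s=0$ of Lemma~\ref{s-design} (reading $\lambda_0=b$), (a) is the comparison of $s=0$ with $s=1$, and (c) is the comparison of $s=1$ with $s=2$, each of which you recover by the appropriate flag count. So your approach and the paper's implicit one are the same in substance.
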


For encyclopedic accounts of key results in design theory, we refer to~\cite{BJL1999,crc06}. Various connections of designs with coding and information theory can be found in a recent survey~\cite{Hu2009} (with many additional references therein).

\section{Optimal Authentication Codes}\label{Auth}

For our further purposes, we summarize the state-of-the-art for authentication codes which bear no secrecy assumptions.
The following theorem (cf.~\cite{Mass86,Sch86}) gives a combinatorial lower bound on the number of encoding rules.

\begin{theorem}$[$Massey--Sch\"{o}bi$]$
If an authentication code is $(t-1)$-fold against spoofing, then the number of encoding rules is bounded below by
\[b \geq \frac{{v \choose t}}{{k \choose t}}.\]
\end{theorem}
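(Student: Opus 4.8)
The plan is to translate the information-theoretic hypothesis into a purely combinatorial covering condition on the valid message sets $M(e)$, and then to finish by a double-counting argument. Recall that each encoding rule $e$ is injective, so $|M(e)|=k$, i.e.\ each $M(e)$ is a $k$-subset of the $v$-element message set $\mathcal{M}$.

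First I would unwind the equality in Massey's bound. Fix $i$ with $0\le i\le t-1$ and a set $M^*$ of $i$ messages that is observed with positive probability. Conditioned on observing $M^*$, every relevant encoding rule $e$ satisfies $M^*\subseteq M(e)$, so $M(e)$ contains exactly $k-i$ of the remaining $v-i$ messages; hence $\sum_{m'\notin M^*}p(m'\text{ valid}\mid M^*)=k-i$, and the average of these $v-i$ conditional probabilities is exactly $(k-i)/(v-i)$. Massey's bound is precisely the statement that the opponent's optimal choice of $m'$ attains at least this average, so $P_{d_i}\ge(k-i)/(v-i)$; the hypothesis that the code is $(t-1)$-fold secure forces equality, which in turn forces every one of these conditional probabilities to equal the common value $(k-i)/(v-i)>0$. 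Thus, for each $m'\notin M^*$, there is an encoding rule of positive probability whose valid set contains $M^*\cup\{m'\}$.

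With this in hand, I would show by induction on $i$ (for $0\le i\le t$) that every $i$-subset of $\mathcal{M}$ is contained in $M(e)$ for at least one encoding rule $e$ with $p_E(e)>0$. The case $i=0$ is immediate. For the step, write an $(i+1)$-subset as $M^*\cup\{m'\}$ with $|M^*|=i\le t-1$; the induction hypothesis yields an encoding rule containing $M^*$, and since every set of source states occurs with positive probability, $M^*$ is then observed with positive probability, so the equality condition above supplies an encoding rule whose valid set contains all of $M^*\cup\{m'\}$. Taking $i=t$ shows that every $t$-subset of $\mathcal{M}$ lies in some $M(e)$.

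Finally, I would count the incidences between $t$-subsets and encoding rules. Each of the $\binom{v}{t}$ $t$-subsets of $\mathcal{M}$ is contained in at least one valid set $M(e)$, while each $M(e)$, being a $k$-set, contains exactly $\binom{k}{t}$ $t$-subsets; double counting the pairs $(T,e)$ with $T\subseteq M(e)$ therefore gives $\binom{v}{t}\le b\binom{k}{t}$, which is the claimed bound. The routine part is this last count; the real work, and the step I expect to be the main obstacle, is extracting the covering statement from the probabilistic equality---in particular, handling the conditioning on positive-probability observations and verifying that the equality case of Massey's inequality pins down each conditional probability individually rather than merely on average.
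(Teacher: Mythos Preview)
Your argument is correct and is essentially the standard proof of the Massey--Sch\"{o}bi bound: from $P_{d_i}=(k-i)/(v-i)$ one deduces that, conditioned on any observed $i$-set $M^*$ of positive probability, every message $m'\notin M^*$ has acceptance probability exactly $(k-i)/(v-i)>0$; the induction then shows that the family $\{M(e):p_E(e)>0\}$ covers all $t$-subsets of $\mathcal{M}$, and the double count finishes. The only points one might tighten are purely cosmetic: in the induction step you should say explicitly that $p(e'\mid M^*)>0$ forces both $M^*\subseteq M(e')$ and $p_E(e')>0$, and in the final count note that you are bounding the number of encoding rules of positive probability, which is of course at most $b$.

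As for comparison with the paper: there is nothing to compare. The paper does not prove this theorem at all; it merely quotes it as a known result due to Massey and Sch\"{o}bi (with references to \cite{Mass86,Sch86}) and uses it as background for the later constructions. So your proposal supplies a proof where the paper offers only a citation; the route you take is the classical one found in the original sources and in Stinson's survey work.
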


An authentication code is called \emph{optimal} if the number of encoding rules meets the lower bound with equality. When the source states are known to be independent and equiprobable, optimal authentication codes which are $(t-1)$-fold secure against spoofing can be constructed via \mbox{$t$-designs} (cf.~\cite{Stin90,Sch86,DeS88}).

\begin{theorem}$[$DeSoete--Sch\"{o}bi--Stinson$]$\label{general}
Suppose there is a \mbox{$t$-$(v,k,\lambda)$} design. Then there is an authentication code for $k$ equiprobable source states, having $v$ messages and $\lambda \cdot {v \choose t}/{k \choose t}$ encoding rules, that is $(t-1)$-fold secure against spoofing. Conversely, if there is an authentication code for $k$ equiprobable source states, having  $v$ messages and ${v \choose t}/{k \choose t}$ encoding rules, that is $(t-1)$-fold secure against spoofing, then there is a Steiner \mbox{$t$-$(v,k,1)$} design.
\end{theorem}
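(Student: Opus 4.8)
The plan is to realize both implications through the same dictionary: identify the set of messages with the point set and the image sets $M(e)$ of the encoding rules with the blocks, and translate each deception probability into a ratio of block-incidence numbers governed by Lemma~\ref{s-design}. For the forward direction I would take $\mathcal{M} := X$, declare the $k$ sources equiprobable, convert each block $B \in \mathcal{B}$ into a single encoding rule by fixing an arbitrary bijection $\mathcal{S} \to B$, and put the uniform strategy on $\mathcal{E}$. Then $|\mathcal{E}| = |\mathcal{B}| = \lambda\binom{v}{t}/\binom{k}{t}$ is immediate from Lemma~\ref{Comb_t=5}(b), giving the asserted number of encoding rules. For the security claim I would argue that, because the sources are equiprobable and each encoding rule is a bijection onto its block, all $i$-subsets of a block are equally likely to be observed, so conditioning on an observed $i$-set $M^*$ makes the posterior on encoding rules uniform over exactly the blocks containing $M^*$. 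By Lemma~\ref{s-design} the design is also an $i$- and an $(i+1)$-design, whence there are $\lambda_i$ such blocks and $\lambda_{i+1}$ of them also contain any further message $m'$; the opponent's best success probability is thus $\lambda_{i+1}/\lambda_i$, and the binomial identity $\lambda_{i+1}/\lambda_i = (k-i)/(v-i)$ (a one-line reduction of the formula in Lemma~\ref{s-design}) shows $P_{d_i} = (k-i)/(v-i)$ for every $0 \le i \le t-1$, i.e.\ $(t-1)$-fold security.

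For the converse I would set $X := \mathcal{M}$ and take the image sets as blocks, aiming to show $(\mathcal{M}, \{M(e)\})$ is a Steiner $t$-$(v,k,1)$ design. Writing $\mu_i(I) := \sum_{e : I \subseteq M(e)} p_E(e)$ for the encoding-weight of an $i$-set $I$, equiprobability again lets me write the optimal deception probability after observing $M^*$ as $\max_{m'} \mu_{i+1}(M^* \cup \{m'\})/\mu_i(M^*)$. The hypotheses $P_{d_i} = (k-i)/(v-i)$ for $0 \le i \le t-1$ then force, by induction on $i$ (at each level the extremal ratio must coincide with its average over the admissible choices), the balance condition that $\mu_i(I)$ is constant in $I$ for every $i \le t$; in particular $\mu_t(T) = \binom{k}{t}/\binom{v}{t} > 0$ for every $t$-set $T$, so every $t$-set lies in at least one image, $\lambda_t(T) \ge 1$. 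On the other hand double counting incident pairs gives $\sum_T \lambda_t(T) = b\binom{k}{t}$, and the optimality hypothesis $b = \binom{v}{t}/\binom{k}{t}$ makes this equal to $\binom{v}{t}$, the number of $t$-sets, so the average of $\lambda_t(T)$ is exactly $1$. Together with $\lambda_t(T) \ge 1$ this forces $\lambda_t(T) = 1$ for all $T$, which covers every $t$-set exactly once and excludes repeated blocks, yielding the desired Steiner $t$-$(v,k,1)$ design.

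The step I expect to be most delicate is the reduction of the information-theoretic deception quantities to pure block counts. In the forward direction this means justifying carefully that equiprobable sources plus bijective encoding collapse the posterior over encoding rules to the uniform distribution on the blocks through $M^*$; in the converse it means establishing the level-by-level balance $\mu_i(I) \equiv \mu_i$ rigorously from the equalities $P_{d_i} = (k-i)/(v-i)$, since this is where the extremal-equals-average inductive argument does the real work. Once the balance and the coverage inequality $\lambda_t(T) \ge 1$ are in hand, the remaining arithmetic---the identity $\lambda_{i+1}/\lambda_i = (k-i)/(v-i)$ and the double count $b\binom{k}{t} = \binom{v}{t}$---is routine given Lemmas~\ref{s-design} and~\ref{Comb_t=5}.
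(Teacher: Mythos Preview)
The paper does not actually prove this theorem: it is quoted as a background result attributed to DeSoete, Sch\"{o}bi, and Stinson, with references~\cite{Stin90,Sch86,DeS88}, and no argument for it appears anywhere in the text. There is therefore nothing in the paper to compare your proposal against.

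That said, what you outline is precisely the classical argument one finds in those references. The forward dictionary (messages $=$ points, blocks $=$ image sets $M(e)$ under arbitrary bijections, uniform encoding strategy) together with the reduction of $P_{d_i}$ to $\lambda_{i+1}/\lambda_i$ via Lemma~\ref{s-design} is the standard proof of the construction half. For the converse, your inductive ``extremal equals average'' derivation of the balance conditions $\mu_i(I)\equiv\mu_i$, followed by the double count $\sum_T\lambda_t(T)=b\binom{k}{t}=\binom{v}{t}$ to pin down $\lambda_t\equiv 1$, is again the textbook route. One small point worth making explicit in a full write-up: once $\lambda_t(T)=1$ for all $T$, two distinct encoding rules cannot share the same image set $M(e)$ (any common $t$-subset would then be covered twice), so the family $\{M(e):e\in\mathcal{E}\}$ is genuinely a set of blocks rather than a multiset, matching the paper's convention that repeated blocks are disallowed.
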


\section{Stinson's Constructions \& Recent Results}\label{Stin}

Using the notation introduced in Section~\ref{PrelimA}, we review in Tables~\ref{equi1} and~\ref{equi2} previous constructions from~\cite{Stin90,Hub2009} for equiprobable source probability distributions. This lists all presently known optimal authentication codes with perfect secrecy.

\begin{table}
\renewcommand{\arraystretch}{1.3}
\caption{Optimal authentication codes with perfect secrecy:\newline Infinite classes}\label{equi1}
\begin{center}
\begin{tabular}{|cc||ccc|c|}
  \hline
  $t_A$ & $t_S$ & $k$ & $v$ & $b$ & \mbox{Ref.} \\
  \hline \hline 
      1 & 1  & $q+1$ & $\frac{q^{d+1}-1}{q-1}$ & $\frac{v(v-1)}{k(k-1)}$ & \cite{Stin90}\\
        &    & $q$  {\scriptsize{\mbox{prime power}}} &  $d \geq 2$ {\scriptsize{\mbox{even}}} & & \\
   \hline
1 & 1  & $3$ & $v \equiv 1$ ({\scriptsize{\mbox{mod}}} $6$) & $\frac{v(v-1)}{6}$ & \cite{Hub2009}\\
   \hline
1 & 1  & $4$ & $v \equiv 1$ ({\scriptsize{\mbox{mod}}} $12$) & $\frac{v(v-1)}{12}$ & \cite{Hub2009}\\
   \hline   
1 & 1  & $5$ & $v \equiv 1$ ({\scriptsize{\mbox{mod}}} $20$) & $\frac{v(v-1)}{20}$ & \cite{Hub2009}\\
   \hline \hline
2 & 1  & $q+1$ & $q^d+1$ & $\frac{v(v-1)(v-2)}{k(k-1)(k-2)}$ & \cite{Hub2009}\\
        &    & $q$ {\scriptsize{\mbox{prime power}}} &  $d \geq 2$ {\scriptsize{\mbox{even}}} & & \\
   \hline
2 & 1  & $4$ & $v \equiv 2, 10$ ({\scriptsize{\mbox{mod}}} $24$) & $\frac{v(v-1)(v-2)}{24}$ & \cite{Hub2009}\\
   \hline
\end{tabular}
\end{center}
\end{table}

\begin{table}
\renewcommand{\arraystretch}{1.3}
\caption{Optimal authentication codes with perfect secrecy:\newline Further examples}\label{equi2}
\begin{center}
\begin{tabular}{|c c||c c c |c|}
  \hline
  $t_A$ & $t_S$ & $k$ & $v$ & $b$ & \mbox{Ref.} \\
  \hline \hline
2 & 1 & 5  & 26 & 260 & \cite{Hub2009}\\
\hline \hline
 &  & 5  & 11 & 66 & \cite{Hub2009}\\
 &  & 7  & 23 & 253 & \cite{Hub2009}\\
 &  & 5  & 23 & 1.771 & \cite{Hub2009}\\
 &  & 5  & 47 & 35.673 & \cite{Hub2009}\\
3 & 1 & 5  & 83 & 367.524 & \cite{Hub2009}\\
 &    & 5  & 71 & 194.327 & \cite{Hub2009}\\
 &    & 5  & 107 & 1.032.122 &\cite{Hub2009}\\
 &    & 5  & 131 & 2.343.328 &\cite{Hub2009}\\
 &   & 5  & 167 & 6.251.311 &\cite{Hub2009}\\
 &    & 5  & 243 & 28.344.492 &\cite{Hub2009}\\
   \hline \hline
  & & 6  & 12 & 132 & \cite{Hub2009}\\
4  & 1& 6  & 84 & 5.145.336 &\cite{Hub2009}\\
  & & 6  & 244 & 1.152.676.008 &\cite{Hub2009}\\
  \hline
\end{tabular}
\end{center}
\end{table}

\section{New Constructions}\label{New}

Starting from the condition of perfect \mbox{$t$-fold} secrecy, we obtain via Bayes' Theorem that
\begin{align*}{}
p_S(S^*|M^*) &=  \frac{p_M(M^*|S^*) p_S(S^*)}{p_M(M^*)}\\ &=
\frac{\sum_{\{e \in \mathcal{E}: S^*=f_e(M^*)\}} p_E(e)p_S(S^*)}{\sum_{\{e \in \mathcal{E}:M^* \subseteq M(e)\}} p_E(e)p_S(f_e(M^*))} = p_S(S^*).
\end{align*}

It follows

\begin{lemma}\label{frequency_multi}
An authentication code has perfect $t$-fold secrecy if and only if, for every positive integer $t^* \leq t$, for every set $M^*$ of $t^*$ messages observed in the channel and for every set $S^*$ of $t^*$ source states, we have
\[\sum_{\{e \in \mathcal{E}: S^*=f_e(M^*)\}} p_E(e)=\sum_{\{e \in \mathcal{E}: M^* \subseteq M(e)\}} p_E(e)p_S(f_e(M^*)).\]
\end{lemma}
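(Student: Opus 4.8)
The plan is to unwind the definition of perfect $t$-fold secrecy directly through Bayes' Theorem, reducing the a posteriori condition on source states to an equivalent condition on messages. Fix $t^* \le t$, a set $M^*$ of $t^*$ messages observed in the channel, and a set $S^*$ of $t^*$ source states. By definition the code has perfect $t$-fold secrecy precisely when $p_S(S^* \mid M^*) = p_S(S^*)$ holds for all such choices, so it suffices to show that this equality is equivalent to the displayed identity.

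First I would compute the two ingredients of Bayes' formula in combinatorial terms. For the conditional probability $p_M(M^* \mid S^*)$, I observe that, once the source states are fixed to be $S^*$, the transmitted messages are $e(S^*)$, and since $e$ is injective these coincide with $M^*$ if and only if $S^* = f_e(M^*)$; summing over the encoding rules, which are chosen independently of the source, yields $p_M(M^* \mid S^*) = \sum_{\{e : S^* = f_e(M^*)\}} p_E(e)$. For the marginal $p_M(M^*)$ I would instead condition on the encoding rule: given $e$ with $M^* \subseteq M(e)$, the set $M^*$ is observed exactly when the source states are $f_e(M^*)$, which occurs with probability $p_S(f_e(M^*))$, while for $e$ with $M^* \not\subseteq M(e)$ the set $M^*$ cannot arise. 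Total probability then gives $p_M(M^*) = \sum_{\{e : M^* \subseteq M(e)\}} p_E(e)\, p_S(f_e(M^*))$. These are exactly the two sides of the claimed identity.

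It remains to connect these to the secrecy condition. Substituting both expressions into Bayes' Theorem reproduces the chain of equalities displayed just before the lemma, so that $p_S(S^* \mid M^*) = p_S(S^*)$ becomes
\[ \frac{p_M(M^* \mid S^*)\, p_S(S^*)}{p_M(M^*)} = p_S(S^*). \]
Since by hypothesis every set of source states occurs with non-zero probability, $p_S(S^*) > 0$, so this factor may be cancelled; clearing the denominator then gives $p_M(M^* \mid S^*) = p_M(M^*)$, which is precisely the asserted equation. As every step after cancellation is an equivalence, running the argument in reverse establishes the converse.

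The step I expect to require the most care is the bookkeeping for the two message-probability expressions: correctly invoking the injectivity of the encoding rules to pass between the event ``$M^*$ is observed'' and the set-equality $S^* = f_e(M^*)$, and verifying that it is precisely the independence of $S$ and $E$ that permits the encoding rule to be summed out in the conditional probability. The cancellation of $p_S(S^*)$ is routine, but it relies on the standing assumption that the source distribution has full support.
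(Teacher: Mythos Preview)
Your argument is correct and is essentially the same as the paper's: the paper likewise applies Bayes' Theorem to $p_S(S^*\mid M^*)$, identifies the numerator with $\sum_{\{e:\,S^*=f_e(M^*)\}}p_E(e)\,p_S(S^*)$ and the denominator with $\sum_{\{e:\,M^*\subseteq M(e)\}}p_E(e)\,p_S(f_e(M^*))$, and then cancels the nonzero factor $p_S(S^*)$. Your added remarks on injectivity, independence of $S$ and $E$, and full support of $p_S$ simply make explicit what the paper leaves implicit.
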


Hence, if the encoding rules in a code are used with equal probability, then for every $t^* \leq t$, a given set of $t^*$ messages occurs with the same frequency in each $t^*$ columns of the encoding matrix.

We can now establish an extension of the main theorem in~\cite{Hub2009}. Our construction yields optimal authentication codes which are multi-fold secure against spoofing and provide perfect multi-fold secrecy.

\begin{theorem}\label{mythm_IZS}
Suppose there is a Steiner \mbox{$t$-$(v,k,1)$} design, where ${v \choose t^*}$ divides the number of blocks $b$ for every positive integer  $t^* \leq t-1$. Then there is an optimal authentication code for $k$ equiprobable source states, having $v$ messages and ${v \choose t}/{k \choose t}$ encoding rules, that is \mbox{($t-1$)}-fold secure against spoofing and simultaneously provides perfect \mbox{$(t-1)$}-fold secrecy.
\end{theorem}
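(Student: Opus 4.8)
The plan is to take the given Steiner $t$-$(v,k,1)$ design $\mathcal{D}=(X,\mathcal{B})$, identify its $v$ points with the messages $\mathcal{M}$ and fix a labelling of the $k$ source states as the columns of the encoding matrix, and to use exactly one encoding rule per block: for each $B\in\mathcal{B}$ I choose a bijection $\pi_B\colon\mathcal{S}\to B$ and set $e_B(s):=\pi_B(s)$, so that $M(e_B)=B$. All $b=\binom{v}{t}/\binom{k}{t}$ rules are used with equal probability $1/b$. Since the deception probability in a spoofing attack depends only on the collection of valid message sets $\{M(e_B)\}=\mathcal{B}$ and not on the internal bijections, the authentication part is settled irrespective of the $\pi_B$: by Theorem~\ref{general} the code is $(t-1)$-fold secure against spoofing, and it is optimal because the number of rules meets the Massey--Sch\"obi bound with equality. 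The entire freedom in the $\pi_B$ is thereby reserved for arranging secrecy.

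Next I would translate perfect $(t-1)$-fold secrecy into a combinatorial frequency condition. For equiprobable keys and sources, Lemma~\ref{frequency_multi} reduces, for each $t^*\le t-1$ and each pair $(S^*,M^*)$ of a $t^*$-set of source states and a $t^*$-set of messages, to the requirement that the count $A(S^*,M^*):=|\{B\in\mathcal{B}:\pi_B(S^*)=M^*\}|$ be the same for all such pairs. Summing $A(S^*,M^*)$ over the $\binom{k}{t^*}$ choices of $S^*$ gives the number $\lambda_{t^*}$ of blocks containing $M^*$ (each such block places $M^*$ into exactly one $t^*$-set of columns), which by Lemma~\ref{s-design} is independent of $M^*$; hence the common value must be $\lambda_{t^*}/\binom{k}{t^*}$. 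Since the design is also a $t^*$-$(v,k,\lambda_{t^*})$ design, Lemma~\ref{Comb_t=5}(b) yields $\binom{v}{t^*}\lambda_{t^*}=b\binom{k}{t^*}$, so the target frequency simplifies to $b/\binom{v}{t^*}$. This is a nonnegative integer for every $t^*\le t-1$ \emph{precisely} by the divisibility hypothesis $\binom{v}{t^*}\mid b$, which is exactly where that assumption enters.

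It then remains to prove that the bijections $\pi_B$ can be chosen so that $A(S^*,M^*)=b/\binom{v}{t^*}$ holds simultaneously for all $t^*\le t-1$; equivalently, that for every $t^*$-subset $P$ of points the $\lambda_{t^*}$ blocks through $P$ can have $P$ distributed perfectly evenly over the $\binom{k}{t^*}$ column-$t^*$-subsets. I would cast this as an integral balancing problem: assigning to each block the uniform distribution over all $k!$ orderings realises every target frequency \emph{on average}, so a feasible fractional solution exists, and the task is to round it to one genuine ordering per block while preserving all margins. For a single level $t^*$ this is a Baranyai-type rounding via integrality of network flows, with the hypothesis $\binom{v}{t^*}\mid b$ guaranteeing integral margins, and I would exploit the regularity of the design (each $t^*$-subset of points lies in exactly $\lambda_{t^*}$ blocks) to organise it.

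The main obstacle is that the conditions for different values of $t^*$ are \emph{coupled}: a single ordering of a block affects the placement of all of its $t^*$-subsets at every level at once, and balance at the top level $t^*=t-1$ does \emph{not} imply balance at lower levels, because extending a $(t^*-1)$-set of columns by one column reshuffles which points are counted. The heart of the proof is therefore to perform the integral rounding so that all $t-1$ margin systems are satisfied together; I expect this to require either a simultaneous flow/Hall-type argument treating all levels in one network, or an inductive construction that refines the orderings level by level while preserving the balances already achieved.
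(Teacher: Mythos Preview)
Your plan coincides with the paper's proof almost step for step: take the blocks as encoding rules used with equal probability, obtain $(t-1)$-fold spoofing security and optimality from Theorem~\ref{general}, reduce perfect $(t-1)$-fold secrecy via Lemma~\ref{frequency_multi} to the frequency condition $A(S^*,M^*)=b/\binom{v}{t^*}$ for all $t^*\le t-1$, and note that the divisibility hypothesis $\binom{v}{t^*}\mid b$ is exactly what makes this target an integer.

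For the single-level balancing the paper is more concrete than your ``Baranyai-type rounding via integrality of network flows'': for a fixed $t^*$ it forms the bipartite incidence graph on $\binom{X}{t^*}\cup\mathcal{B}$, splits each $t^*$-subset vertex into copies so that the graph becomes $\binom{k}{t^*}$-regular, and then invokes K\"onig's theorem to edge-colour it with $\binom{k}{t^*}$ colours, one colour per column-$t^*$-set. This is the explicit incarnation of the flow argument you have in mind.

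The point you flag as the main obstacle---that one needs a \emph{single} ordering $\pi_B$ per block that balances all levels $t^*\le t-1$ simultaneously---is real, and the paper's proof does not actually resolve it either. The paper performs the edge-colouring separately for each $t^*$ and asserts that ``an ordering on each block of $\mathcal{D}$ can be obtained'' from it; but for $t^*\ge 2$ a proper edge-colouring of the $(t^*\text{-subset},\text{block})$ graph only produces, for each block, a bijection between its $t^*$-subsets and the column-$t^*$-sets, not in general one induced by a bijection $\pi_B\colon\mathcal{S}\to B$, and the paper gives no argument for how the colourings at different levels are reconciled into one common ordering. So your diagnosis is accurate: the coupling across levels is the heart of the matter, and both your proposal and the paper's argument leave precisely this step open.
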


\begin{proof}
Let $\mathcal{D}=(X,\mathcal{B})$ be a Steiner \mbox{$t$-$(v,k,1)$} design, where ${v \choose t^*}$ divides $b$ for every positive integer $t^* \leq t-1$. By Theorem~\ref{general}, the authentication code has \mbox{$(t-1)$}-fold security against spoofing attacks. Hence, it remains to prove that the code also achieves perfect \mbox{$(t-1)$}-fold secrecy under the assumption that the encoding rules are used with equal probability. With respect to Lemma~\ref{frequency_multi}, we have to show that, for every $t^*\leq t-1$, a given set of $t^*$ messages occurs with the same frequency in each $t^*$ columns of the resulting encoding matrix. This can be accomplished by ordering, for each $t^*\leq t-1$, every block of $\mathcal{D}$ in such a way that  every \mbox{$t^*$-subset} of $X$ occurs in each possible choice in precisely $b/{v \choose t^*}$ blocks.
Since every $t^*$-subset of $X$ occurs in exactly $\lambda_{t^*}={v-{t^*} \choose t-{t^*}}/{k-{t^*} \choose t-{t^*}}$ blocks due to Lemma~\ref{s-design}, necessarily ${k \choose {t^*}}$ must divide $\lambda_{t^*}$. By Lemma~\ref{Comb_t=5}~(b), this is equivalent to saying that ${v \choose t^*}$ divides $b$.
To show that the condition is also sufficient, we consider the bipartite (\mbox{$t^*$-subset}, block) incidence graph of $\mathcal{D}$ with vertex set ${X \choose t^*} \cup \mathcal{B}$, where $(\{x_i\}_{i=1}^{t^*},B)$ is an edge if and only if $x_i \in B$ ($1 \leq i \leq t^*$) for $\{x_i\}_{i=1}^{t^*} \in {X \choose t^*}$ and $B \in \mathcal{B}$. An ordering on each block of $\mathcal{D}$ can be obtained via an edge-coloring of this graph using ${k \choose t^*}$ colors in such a way that each vertex $B \in \mathcal{B}$ is adjacent to one edge of each color, and each vertex $\{x_i\}_{i=1}^{t^*} \in {X \choose t^*}$ is adjacent to $b/ {k \choose t^*}$ edges of each color. Specifically, this can be done by first splitting up each vertex $\{x_i\}_{i=1}^{t^*}$ into $b/{k \choose t^*}$ copies, each having degree ${k \choose t^*}$, and then by finding an appropriate edge-coloring of the resulting ${k \choose t^*}$-regular bipartite graph using ${k \choose t^*}$ colors. The claim follows now by taking the ordered blocks as encoding rules, each used with equal probability.
\end{proof}

\begin{remark}
It follows from the proof that we may obtain optimal authentication codes that provide \mbox{($t-1$)}-fold security against spoofing and at the same time perfect \mbox{$(t^\prime-1)$}-fold secrecy for $t^\prime \leq t$, when the assumption of the above theorem holds with ${v \choose t^*}$ divides $b$ for every positive integer $t^* \leq t^\prime-1$.
\end{remark}

As an application, we give an infinite class of optimal codes which are two-fold secure against spoofing and achieve perfect two-fold secrecy. This appears to be the first infinite class of authentication and secrecy codes with these properties.

\begin{theorem}\label{mythm2_IZS}
For all positive integers $v \equiv 2$ (mod $24$), there is an optimal authentication code for $k=4$ equiprobable source states, having $v$ messages, and $v(v-1)(v-2)/24$ encoding rules, that is two-fold secure against spoofing and provides perfect two-fold secrecy.
\end{theorem}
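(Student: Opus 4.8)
The plan is to invoke Theorem~\ref{mythm_IZS} in the special case $t=3$ and $k=4$. A Steiner $3$-$(v,4,1)$ design is exactly a Steiner quadruple system $\mathrm{SQS}(v)$, and for these parameters the number of encoding rules promised by that theorem is $\binom{v}{3}/\binom{4}{3}=v(v-1)(v-2)/24$, which matches the count in the statement; moreover the conclusions ``$(t-1)$-fold secure against spoofing'' and ``perfect $(t-1)$-fold secrecy'' become precisely two-fold security and perfect two-fold secrecy. Thus everything reduces to verifying the two hypotheses of Theorem~\ref{mythm_IZS}: existence of the design, and the divisibility conditions on $b$.

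First I would secure existence. By Hanani's classical theorem on quadruple systems, an $\mathrm{SQS}(v)$ exists if and only if $v\equiv 2$ or $4\pmod 6$. Since $v\equiv 2\pmod{24}$ forces $v\equiv 2\pmod 6$, the required Steiner $3$-$(v,4,1)$ design exists for every admissible $v$.

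Next I would check that $\binom{v}{t^*}$ divides $b=v(v-1)(v-2)/24$ for $t^*=1,2$, the two values covered by $t^*\le t-1=2$. A direct computation gives $b/\binom{v}{1}=(v-1)(v-2)/24$ and $b/\binom{v}{2}=(v-2)/12$. Writing $v=24m+2$, one has $v-2=24m$, which is divisible by $12$ (settling $t^*=2$) and by $24$, while $(v-1)(v-2)=(24m+1)\cdot 24m$ is divisible by $24$ (settling $t^*=1$). Hence both divisibilities hold, and Theorem~\ref{mythm_IZS} applies to yield the asserted code.

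The only genuinely external ingredient is Hanani's existence result; the remaining steps are elementary arithmetic, so I do not expect a real obstacle here --- the statement is a clean corollary of Theorem~\ref{mythm_IZS}. The one point requiring care is that the \emph{single} congruence $v\equiv 2\pmod{24}$ must simultaneously guarantee $\mathrm{SQS}(v)$-existence together with \emph{both} divisibility conditions; it is exactly this triple requirement that pins down the residue class, since $t^*=1$ demands $24\mid(v-1)(v-2)$ and $t^*=2$ demands $12\mid(v-2)$.
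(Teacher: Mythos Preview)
Your proposal is correct and follows essentially the same route as the paper: verify existence of an $\mathrm{SQS}(v)$ via Hanani's theorem, check the divisibility conditions $v\mid b$ and $\binom{v}{2}\mid b$ for $v\equiv 2\pmod{24}$, and then apply Theorem~\ref{mythm_IZS}. The only cosmetic difference is that the paper phrases the divisibility checks as congruence classes ($v\equiv 2$ or $10\pmod{24}$ for $v\mid b$; $v\equiv 2\pmod{12}$ for $\binom{v}{2}\mid b$) rather than substituting $v=24m+2$ directly.
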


\begin{proof}
We will make use of Steiner quadruple systems (cf.~Section~\ref{PrelimA}). Hanani~\cite{Han1960} showed that a necessary and sufficient condition for the existence of a $\mbox{SQS}(v)$ is that $v \equiv 2$ or $4$ (mod $6$) $(v \geq 4)$. Hence, the condition $v \mid b$ is fulfilled when $v \equiv 2$ or $10$ (mod $24$) and the condition ${v \choose 2} \mid b$ when $v \equiv 2$ (mod $12$) in view  Lemma~\ref{Comb_t=5}~(b). Therefore, if we assume that $v \equiv 2$ (mod $24$), then we can apply Theorem~\ref{mythm_IZS} to establish the claim.
\end{proof}

We present the smallest example:

\begin{example}
An optimal authentication code for $k=4$ equiprobable source states, having $v=26$ messages, and $b=650$ encoding rules, that is two-fold secure against spoofing and provides perfect two-fold secrecy can be constructed from a Steiner quadruple system $\mbox{SQS}(26)$. Each encoding rule is used with probability $1/650$.
\end{example}

\begin{remark}
For $v=26$, the first $\mbox{SQS}(v)$ was constructed by Fitting~\cite{Fitt1915}, admitting a \mbox{$v$-cycle} as an automorphism (\emph{cyclic} $\mbox{SQS}(v)$). We generally remark that the number $N(v)$ of non-isomorphic $\mbox{SQS}(v)$ is only known for $v=8,10,14,16$ with $N(8)=N(10)=1$, $N(14)=4$, and $N(16)=1{,}054{,}163$ (cf.~\cite{KOP2006}). Lenz~\cite{Lenz1985} proved that for the admissible values of $v$, the number $N(v)$ grows exponentially, i.e. $\liminf_{v \rightarrow \infty} \frac{\log N(v)}{v^3}>0$. For comprehensive survey articles on Steiner quadruple systems, we refer the reader to~\cite{HartPh1992,LindRo1978}. For classifications of specific classes of highly regular Steiner quadruple systems and Steiner designs, see, e.g.,~\cite{Hu2010,Hu2008}.
\end{remark}

\section*{Acknowledgment}

The author thanks Doug Stinson for an interesting conversation on this topic. The author gratefully acknowledges support of his work by the Deutsche Forschungsgemeinschaft (DFG) via a Heisenberg grant (Hu954/4) and a Heinz Maier-Leibnitz Prize grant (Hu954/5).

\end{document}